\newenvironment{transducer}[1][]
	{\begin{center}
	\begin{tikzpicture}[font=\footnotesize,shorten >=1pt,node distance=2.75cm,
	on grid,>=stealth',initial text=,every node/.style={align=center},
	every state/.style={inner sep=1pt},
	every path/.style={->,bend angle=20},#1]}
	{\end{tikzpicture}\end{center}}
\newcommand\pssn{\par\smallskip\noindent}
\newcommand\pnsi{\par\indent}
\newcommand\pnsn{\par\noindent}
\newlength{\algoindent}
\theoremstyle{plain}
\newtheorem{theorem}{Theorem}[]
\newtheorem{lemma}[theorem]{Lemma}
\theoremstyle{definition}
\newtheorem{definition}[theorem]{Definition}
\theoremstyle{remark}
\newtheorem{remark}[theorem]{Remark}
\renewcommand{\phi}{\varphi}
\renewcommand{\epsilon}{\varepsilon}
\newcommand{\wrt}{w.\,r.\,t.\xspace}
\newcommand\nextword{{\rm\texttt{nonMax}}}
\newcommand\makecode{{\rm\texttt{makeCode}}}
\newcommand\pickfrom{{\rm\texttt{pickFrom}}}
\newcommand\none{{\rm\texttt{None}}}
\newcommand\ff{\ensuremath{f}}  
\newcommand\ef{\ensuremath{\epsilon}}  
\newcommand\fp{95\%}
\newcommand\fv{0.95}    
\newcommand\ep{5\%}
\newcommand\ev{0.05}    
\newcommand\nv{2\,000}  
\newcommand\prob[1]{{\mathtt{Pr}\xspace\Big[#1\Big]}}  
\newcommand{\mlt}{\ensuremath{<}\xspace}
\newcommand{\mle}{\ensuremath{\le}\xspace}
\newcommand{\sse}{\subseteq}
\newcommand{\sm}{\setminus}
\newcommand{\e}{\epsilon}
\newcommand{\ew}{\epsilon}          
\newcommand\al{A}        
\newcommand\lblock{\al^\ell}
\newcommand\symb{a}           
\newcommand\aut{\ensuremath{\mathbf{a}}}   
\newcommand\autb{\ensuremath{\mathbf{b}}}   
\newcommand\autv{\ensuremath{\mathbf{v}}}   
\newcommand\autc{\ensuremath{\mathbf{c}}}   
\newcommand\autd{\ensuremath{\mathbf{d}}}   
\newcommand\autm{\ensuremath{\mathbf{m}}}   
\DeclareMathOperator{\opL}{L}
\newcommand\lang[1]{\ensuremath\opL(#1)} 
\DeclareMathOperator{\opC}{C}
\newcommand\code[1]{\ensuremath\opC(#1)} 
\newcommand{\trt}{\mathbf{t}}   
\newcommand\trs{\mathbf{s}}      
\newcommand{\trti}{\trt^{-1}}   
\newcommand\chann{\ensuremath{\mathbf{er}}\xspace}    
\newcommand\err{\chann}      
\newcommand\erri{\err^{-1}}      
\newcommand\chid[1]{\mathtt{id}_{#1}} 
\newcommand\chbsid[1]{\mathtt{bsid}_{#1}} 
\newcommand\chsub[1]{\mathtt{sub}_{#1}} 
\newcommand\chdel[1]{\mathtt{del}_{#1}} 
\newcommand\chsegd[1]{\mathtt{segd}_{#1}} 
\newcommand\chins[1]{\mathtt{ins}_{#1}} 
\newcommand\chov{\mathtt{ov}} 
\newcommand\stoch{\mathbf{sc}}      
\DeclareMathOperator{\opMI}{maxind}
\newcommand\mind[2]{\ensuremath{\opMI(#1,#2)}}
\newcommand\filter{\rhd}   
\newcommand{\fullblock}{\ensuremath{\mathrm{FULLBLOCK}}\xspace}
\newcommand{\maxed}{\ensuremath{\mathrm{MAXED}}\xspace}
\newcommand{\ofw}{\ensuremath{\mathrm{OF}}\xspace}
\begin{document}
\title{Channels with Synchronization/Substitution Errors\\
       and  Computation of Error Control Codes}

\author{
\IEEEauthorblockN{Stavros Konstantinidis}
\IEEEauthorblockA{Department of Mathematics
and Computing Science\\
Saint Mary's University\\
Halifax, Nova Scotia, Canada\\
Email: s.konstantinidis@smu.ca\\
(research supported by NSERC)}
\and
\IEEEauthorblockN{Nelma Moreira and Rog{\'e}rio Reis}
\IEEEauthorblockA{CMUP \& DCC\\
Faculdade de Ci{\^e}ncias da Universidade do Porto\\
Rua do Campo Alegre, 4169–007 Porto Portugal\\
Email: \{nam,rvr\}@dcc.fc.up.pt\\
(research supported by FCT project UID/MAT/00144/2013)}
}

\maketitle

\begin{abstract}
We introduce the concept of an \ff-maximal error-detecting block code, for some
parameter \ff{} between 0 and 1, in order
to formalize the situation where a block code is close to maximal with respect to being error-detecting. Our motivation for this is that constructing a maximal error-detecting code is a computationally hard problem. 
We present a randomized algorithm that takes as input 
two positive
integers $N,\ell$, a probability value \ff, and a specification of the 
errors permitted in some application, and generates an error-detecting, or error-correcting, block code 
having up to $N$ codewords of length $\ell$. 
If the algorithm finds less than $N$ codewords, then those codewords constitute a code that is \ff-maximal with high probability. 
The error specification (also called channel)
is modelled as a transducer,
which allows one to model  any rational
combination of substitution and synchronization errors.
We also present some elements of our implementation of various error-detecting properties and their associated methods. Then, we  show several tests of the implemented randomized algorithm on various channels.
A methodological contribution is the
presentation of how various desirable error combinations 
can be expressed formally and processed
algorithmically.
\end{abstract}

\IEEEpeerreviewmaketitle

\section{Introduction}\label{sec:intro}
We  consider block codes $C$, that is, sets of words
of the same length $\ell$, for some integer $\ell>0$.
The elements of $C$ are called \emph{codewords} or 
\emph{$C$-words.}
We use $\al$ to denote the alphabet used for making words and
\[
\al^\ell = \mbox{the set of all words of length $\ell$.}
\]
Our typical alphabet will be the binary one $\{0,1\}$.
We shall use the variables $u,v,w,x,y,z$ 
to denote words over $\al$ (not necessarily in $C$). 
The \emph{empty word} is denoted by $\ew$.
\pnsi
We also consider error specifications $\err$, which we call
\emph{combinatorial channels}, or simply 
\emph{channels}. A channel
$\err$ specifies, for each allowed input word $x$, the set $\err(x)$
of all possible output words. We assume that error-free
communication is always possible, so $x\in\err(x)$.
On the other hand, if $y\in\err(x)$ and $y\not=x$ then
the channel introduces errors into $x$.
Informally, a block code $C$ is \err-detecting if the channel cannot turn a given $C$-word
into a \emph{different} $C$-word. It is \err-correcting
if the channel cannot turn two \emph{different} $C$-words into the same word. 
\pnsi
In Section~\ref{sec:channels}, we make the above concepts 
mathematically precise, and show how known examples of combinatorial channels can be defined formally so that
they can be used as input to algorithms.
In Section~\ref{sec:main}, we present two randomized
algorithms: the first one decides
(up to a certain degree of confidence) whether a
given block code $C$ is maximal \err-detecting for a
given channel $\err$. The second algorithm is given a
channel $\err$, an \err-detecting 
block code $C\subseteq\al^\ell$ (which could be empty), and
  integer $N>0$,
and attempts to add to $C$ $N$ new words 
of length $\ell$ resulting into a new \err-detecting
code. If less than $N$ words get added then either
the new code is \fp-maximal or the chance that
a randomly chosen word can be added is less than \ep.
Our motivation for considering a randomized algorithm  is that embedding a given \err-detecting block code $C$ into a maximal \err-detecting block code is a computationally hard problem---this is shown in Section~\ref{sec:coNP}.
In Section~\ref{sec:implem}, we discuss briefly some
capabilities of the 
new module \texttt{codes.py}
in the open source software package FAdo \cite{Fado,AAAMR:2009}
and we discuss some tests of the randomized 
algorithms on various channels.
In Section~\ref{sec:further}, we discuss a few more
points on channel modelling and conclude with
directions for future research.

We note that, while there are various algorithms for computing
error-control codes, to our knowledge these work
for specific channels and implementations are generally not 
open source.

\section{Channels and Error Control Codes}\label{sec:channels}
We need a mathematical
model for channels that is useful
for answering \emph{algorithmic} questions pertaining to
error control codes. While many models of channels and codes
for substitution-type errors use a rich 
set of mathematical structures, this is not the case for 
channels involving synchronization errors \cite{MBT:2010}. 
We believe the
appropriate model for our purposes is that of a transducer.
We note that transducers have  been defined as early as in 
\cite{ShaWea:1949}, and are a  powerful computational 
tool for
processing sets of words---see \cite{Be:1979} and 
pg 41--110 of \cite{FLhandbookI}.
\begin{definition}\label{def:transd}
A  \emph{transducer} is a 5-tuple\footnote{The general definition of transducer allows two alphabets: the input and the output alphabet. Here, however, we assume that both alphabets are the same.}
$\trt=(S,\al,I,T,F)$ 
such that $\al$ 
is the alphabet, $S$ is the finite  set of states, $I\sse S$ is the set of initial states, $F\sse S$ is the set of final states,
and $T$ is the finite set of transitions. Each
transition is a 4-tuple $(s_i, x_i/y_i, t_i)$,
where $s_i,t_i\in S$ and $x_i,y_i$ are words over $\al$. 
The word $x_i$ is the \emph{input label} and the word $y_i$ 
is the \emph{output label} of the transition. For two words $x,y$
we write $y\in\trt(x)$ to mean that $y$ is a possible output
of $\trt$ when $x$ is used as input. More precisely, there
is a sequence
\[
(s_0,x_1/y_1,s_1),\>(s_1,x_2/y_2,s_2),\ldots,\> (s_{n-1},x_n/y_n,s_n)
\]
of transitions such that $s_0\in I$, $s_n\in F$,
$x=x_1\cdots x_n$ and $y=y_1\cdots y_n$. 
The \emph{relation} $R(\trt)$ \emph{realized} by $\trt$ is the set of word pairs $(x,y)$ such that $y\in\trt(x)$. 
A relation $\rho\sse\al^*\times\al^*$ is called \emph{rational} if it is realized by a transducer.
If every input and every output label of $\trt$ is in $\al\cup\{\ew\}$, then we say that $\trt$ is in \emph{standard form}.
The \textit{domain} of the transducer $\trt$ is the set of words $x$ such that $\trt(x)\not=\emptyset$. The transducer is called
\textit{input-preserving} if $x\in\trt(x)$, for all words $x$ in the domain of $\trt$.
The \emph{inverse}  of $\trt$, denoted by $\trti$, is the transducer that is
simply obtained by
making a copy of 
$\trt$ and changing each transition
$(s,x/y,t)$ to $(s,y/x,t)$. Then
\[
x\in\trti(y)\>\mbox{ if and only if }\> y\in\trt(x).
\]
\end{definition}
We note that every transducer can be converted (in linear time) to one in standard form realizing the same relation.
\pnsi 
In our objective to model channels $\err$ as transducers, 
we require that a transducer $\err$ is a channel if it allows error-free communication, that is, 
$\err$ is input-preserving.
\begin{definition}\label{def:chann}
An \emph{error specification} \err{} is an input-preserving transducer.
The \emph{(combinatorial) channel} specified by $\err$ is 
$R(\err)$, that is, the relation realized by \err. For the purposes of this paper, however, we simply identify the concept of channel with that of error specification.
\end{definition}
A piece of notation that is useful in this work is
the following, where $W$ is any set of words,
\begin{equation}\label{eq:erronset}
\err(W)\>=\>\bigcup_{w\in W}\err(w)
\end{equation}
Thus, $\err(W)$ is the set of all possible outputs of $\err$
when the input is any word from $W$. For example, if $\err=\chid1$ = the channel that 
allows up to 1 symbol to be deleted or inserted in the input word, then $\chid1(\{00, 11\})\>=\>$
\[
\{00,0,000,100,010,001,11,1,011,101,110,111\}.
\]
Fig.~\ref{fig:ed} considers  examples of
channels that have been defined
in past research when designing error control codes.
Here these channels are shown as transducers,
which can be used as inputs to algorithms for 
computing error
control codes. 
For the channel $\chsub2$, we have 
$
00101\in\chsub2(00000)
$
because on input 00000, the channel $\chsub2$ can read 
the first two input 0's at state $s$
and output 0, 0; 
then, still at state $s$, read the 3rd 0 and output 1 and go to state $t_1$; etc.
\begin{figure}[ht!]
\begin{transducer}
	\node [state,initial,accepting] (q0) {$s$};
	\node [node distance=0.75cm,left=of q0,anchor=east] 
	      {$\chsub2=\>$};
	\node [state,accepting,right of=q0] (q1) {$t_1$};
	\node [state,accepting,right of=q1] (q1b) {$t_2$};
	\node [node distance=2.25cm, state,initial,accepting,below of=q0] (q2) {$s$};
	\node [node distance=0.75cm,left=of q2,anchor=east] 
	      {$\chid2=\>$};
	\node [state,accepting,right of=q2] (q3) {$t_1$};
	\node [state,accepting,right of=q3] (q3b) {$t_2$};
	\node [node distance=2.25cm, state,initial,accepting,below of=q2] (q4) {$s$};
	\node [node distance=0.75cm,left=of q4,anchor=east] 
	      {$\chdel1=\>$};
	\node [state,right of=q4] (q5) {$r$};
	\node [state,accepting,right of=q5] (q5b) {$t$};
	\node [node distance=2.25cm,state,initial,accepting,accepting,below of=q4] (q6) {$s$};
	\node [node distance=0.75cm,left=of q6,anchor=east] 
	      {$\chins1=\>$};
	\node [state,right of=q6] (q7) {$r$};
	\node [state,accepting,right of=q7] (q7b) {$t$};
	\path (q0) edge [loop above] node [above] {$0/0,\,1/1$} ()
		(q0) edge node [above] {$0/1$} (q1)
		(q0) edge node [below] {$1/0$} (q1)
		(q1) edge [loop above] node [above] {$$0/0,\,1/1$$} ()

		(q1) edge node [above] {$0/1$} (q1b)
		(q1) edge node [below] {$1/0$} (q1b)
		(q1b) edge [loop above] node [above] {$$0/0,\,1/1$$} ()

		(q2) edge node [above] {$\symb/\ew$} (q3)
		(q2) edge node [below] {$\ew/\symb$} (q3)
		(q3) edge node [above] {$\symb/\ew$} (q3b)
		(q3) edge node [below] {$\ew/\symb$} (q3b)
		(q3b) edge [loop above] node [above] {$a/a$} ()
		(q2) edge [loop above] node [above] {$\symb/\symb$} ()
		(q3) edge [loop above] node [above] {$\symb/\symb$} ()

		(q4) edge node [above] {$\symb/\ew$} (q5)
		(q5) edge node [above] {$\ew/\symb$} (q5b)
		(q4) edge [loop above] node [above] {$\symb/\symb$} ()
		(q5) edge [loop above] node [above] {$\symb/\symb$} ()
	    (q6) edge node [above] {$\ew/\symb$} (q7)
		(q7) edge node [above] {$\symb/\ew$} (q7b)
		(q6) edge [loop above] node [above] {$\symb/\symb$} ()
		(q7) edge [loop above] node [above] {$\symb/\symb$} ();
\end{transducer}
\begin{center}
\caption{
Examples of (combinatorial) channels: $\chsub2,\chid2,\chdel1,\chins1$. 
\underline{Notation}: A short arrow with no label
points to an initial state (e.g., state $s$), and a double
line indicates a final state (e.g., state $t$). 
An arrow with label $\symb/\symb$ represents multiple 
transitions, each with label $\symb/\symb$, for
 $\symb\in\al$;
and similarly for an arrow with label  $\symb/\ew$---recall, 
$\ew$ = empty word.
Two or more labels on one arrow from some state $p$ to some state $q$ represent
multiple transitions between $p$ and $q$ having these labels. 
\underline{Channel  $\chsub2$}: 
uses the binary alphabet $\{0,1\}$. On input $x$, $\chsub2$ 
outputs $x$, or any word that results by performing one or two substitutions in $x$. The latter case
is when $\chsub2$ takes the transition $(s,0/1,t_1)$ or
$(s,1/0,t_1)$, corresponding to one error, and then possibly
$(t_1,0/1,t_2)$ or
$(t_1,1/0,t_2)$, corresponding to a second error.
A block code $C$ is $\chsub2$-detecting iff 
the min. Hamming distance of $C$ is $>2$.
\underline{Channel  $\chid2$}: alphabet $\al$ not specified.
On input $x$,  $\chid2$ 
outputs a word that results by inserting and/or deleting 
at most 2 symbols in $x$. A block code $C$ is $\chid2$-detecting iff 
the min. Levenshtein distance of $C$ is $>2$ \cite{Levenshtein:66:en}.
\underline{Channels  $\chdel1,\chins1$}: considered in \cite{PAF:2013}, here alphabet 
$\al$ not specified.
On input $x$,  $\chdel1$ 
outputs either $x$, or any word that results by deleting
exactly one symbol in $x$ and then inserting a symbol at the end of~$x$.}
\label{fig:ed}
\end{center}
\end{figure}
\pnsi
The concepts of error-detection and -correction mentioned
 in the introduction are  phrased below
more rigorously.
\begin{definition}\label{def:errcontrol}
Let $C$ be a block code and let $\err$ be a channel.
We say that $C$ is \emph{$\err$-detecting} if
\[
v\in C,\> w\in C \>\mbox{ and }\> w\in\err(v)\>\>\mbox{imply} \>\>
v=w.
\]
We say that $C$ is \emph{$\err$-correcting} if
\[
v\in C,\> w\in C \>\mbox{ and }\> x\in\err(v)\cap\err(w)\>\>\mbox{ imply} \>\>
v=w.
\]
An $\err$-detecting block code $C$ is called \emph{maximal
$\err$-detecting} if $C\cup\{w\}$ is not $\err$-detecting for any word of length $\ell$ that is not in $C$. The concept
of a \emph{maximal
$\err$-correcting}  code is similar.
\end{definition}
%
%
From a logical point of view 
(see Lemma~\ref{lem:edvsec} below) 
error-detection
subsumes the concept of error-correction. This connection is stated
already in \cite{KonSil:2008} but without making use of it there.
Here we add the fact that maximal error-detection subsumes
maximal error-correction.  Due to this observation,
in this paper  we focus only on 
error-detecting codes. 
\par
\underline{Note}: The operation `$\circ$' between two 
transducers  $\trt$ and $\trs$ is called 
\emph{composition} and returns a new 
transducer $\trs\circ\trt$ such that
$
z\in(\trs\circ\trt)(x)\>\>\mbox{ if and only if }\>\>
y\in\trt(x)\>\mbox{and}\>z\in\trs(y),\mbox{for some $y$}.
$
\begin{lemma}\label{lem:edvsec}
Let $C\sse\al^\ell$ be a block code  
and  $\err$ be a channel. Then
$C$ is $\err$-correcting if
and only if it is $(\erri\circ\err)$-detecting. 
Moreover, $C$ is maximal $\err$-correcting if
and only if it is maximal $(\erri\circ\err)$-detecting.
\end{lemma}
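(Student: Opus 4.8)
The plan is to unpack both biconditionals directly from the definitions, using the composition identity stated in the Note as the bridge between $\err$-correction and $(\erri\circ\err)$-detection. First I would record the key translation: for any words $v,w$, we have $w\in(\erri\circ\err)(v)$ iff there is some $x$ with $x\in\err(v)$ and $w\in\erri(x)$, and by the inverse law $w\in\erri(x)$ is equivalent to $x\in\err(w)$. Hence $w\in(\erri\circ\err)(v)$ iff $\err(v)\cap\err(w)\neq\emptyset$. With this in hand, the first equivalence is almost immediate: $C$ is $\err$-correcting means that $v,w\in C$ and $x\in\err(v)\cap\err(w)$ force $v=w$, which by the translation is exactly the statement that $v,w\in C$ and $w\in(\erri\circ\err)(v)$ force $v=w$, i.e.\ $C$ is $(\erri\circ\err)$-detecting. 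One small point to check here is that $\erri\circ\err$ is itself a channel (input-preserving): since $\err$ is input-preserving, $x\in\err(x)$ whenever $x$ is in the domain, and then $x\in\erri(x)$, so $x\in(\erri\circ\err)(x)$; thus Definition~\ref{def:errcontrol} applies to it.

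For the ``moreover'' part I would argue that the two maximality notions coincide because the underlying property (``$C\cup\{u\}$ fails the relevant condition'') is preserved under the same translation, for every candidate word $u\in\al^\ell\setminus C$. Concretely, fix such a $u$. By the first equivalence applied to the code $C\cup\{u\}$, the set $C\cup\{u\}$ is $\err$-correcting iff it is $(\erri\circ\err)$-detecting; taking contrapositives, $C\cup\{u\}$ fails to be $\err$-correcting iff it fails to be $(\erri\circ\err)$-detecting. Now $C$ is maximal $\err$-correcting iff $C$ is $\err$-correcting and for every $u\in\al^\ell\setminus C$ the code $C\cup\{u\}$ is not $\err$-correcting; by the first equivalence and the sentence just proved, this holds iff $C$ is $(\erri\circ\err)$-detecting and for every such $u$ the code $C\cup\{u\}$ is not $(\erri\circ\err)$-detecting, which is precisely maximal $(\erri\circ\err)$-detection.

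The argument is essentially bookkeeping, so I do not expect a genuine obstacle; the only thing that needs care is making sure the inverse/composition manipulations are applied in the right direction (in particular that $\err(v)\cap\err(w)\neq\emptyset$, a symmetric condition in $v,w$, matches $w\in(\erri\circ\err)(v)$, which a priori looks asymmetric — symmetry of $\erri\circ\err$ as a relation follows because $\err(v)\cap\err(w)\neq\emptyset$ is symmetric). I would state the translation lemma $w\in(\erri\circ\err)(v)\iff\err(v)\cap\err(w)\neq\emptyset$ as a displayed equation at the top of the proof and then quote it twice, once for each biconditional, to keep the write-up short and transparent.
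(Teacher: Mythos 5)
Your proof is correct and follows essentially the same route as the paper: the ``moreover'' part is obtained by applying the first equivalence to each $C\cup\{u\}$, which is exactly the paper's argument (written there as a contradiction rather than a direct biconditional chain). The only difference is that you also prove the first equivalence via the translation $w\in(\erri\circ\err)(v)\iff\err(v)\cap\err(w)\neq\emptyset$, whereas the paper simply cites an earlier reference for it; your derivation of that translation, and your check that $\erri\circ\err$ is input-preserving, are both correct.
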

\begin{proof}
The first statement is already in \cite{KonSil:2008}. 
For the second statement,
first assume that $C$ is maximal $\err$-correcting 
and consider any word $w\in\al^\ell\setminus C$. 
If $C\cup\{w\}$ were $(\err\circ\erri)$-detecting then
$C\cup\{w\}$ would also be $\err$-correcting and, hence,
$C$ would be non-maximal; a contradiction. Thus, 
$C$ must be maximal $(\err\circ\erri)$-detecting. The converse
can be shown analogously.
\end{proof}
The operation `$\lor$' between any two transdcucers $\trt$ and $\trs$
is obtained by simply taking the union of their five corresponding 
components (states, alphabet, initial states, 
transitions, final states) after a renaming, if necessary, of the states
such that the two transdcucers have no states in common.
Then
\[
(\trt\lor\trs)(x)\>\> = \>\> \trt(x)\>\cup\>\trs(x).
\]
Let  $\err$ be a channel, let $C\sse\al^\ell$ be an \err-detecting block code,  and let $w\in\lblock\sm C$.
In \cite{DudKon:2012}, the authors show that 
\begin{equation}\label{eq:addword}
\mbox{
$C\cup\{w\}$ is $\err$-detecting iff $w\notin(\err\lor\erri)(C)$.
}
\end{equation}

\begin{definition}\label{def:addword}
Let $C\sse\lblock$ be an $\err$-detecting block code.
We say that  \emph{a word $w$ can be added into $C$} if $w\notin(\err\lor\erri)(C)$.
\end{definition}

Statement~(\ref{eq:addword}) above implies that 
\begin{equation}\label{eq:maxed}
\mbox{
$C$ is maximal $\err$-detecting \;iff\;
$\al^\ell\;\setminus\;(\err\lor\erri)(C)\;=\;\emptyset$.}
\end{equation}
\begin{definition}\label{def:maxind}
The \emph{maximality index} of a block code 
$C\sse\al^\ell$ \wrt 
 a channel $\err$ is the quantity
\[
\mind{C}{\err} = \frac{|\al^\ell\cap(\err\lor\erri)(C)|}{|\al^\ell|}.
\]
Let $\ff$ be a real number in $[0,1]$.
An \err-detecting block code $C$ is called \emph{\ff-maximal 
$\chann$-detecting} if $\mind{C}{\err}\ge\ff$.
\end{definition}
The  maximality index of $C$ is the proportion of 
the `used up' words of length $\ell$ over all words of 
length $\ell$. One can verify the following useful lemma.
\begin{lemma}\label{lem:maxind}
Let \err be a channel and
let $C\sse\al^\ell$ be an \err-detecting block code.
\begin{enumerate}
\item 
$\mind{C}{\err} = 1$ if and only if 
$C$ is maximal $\err$-detecting.
\item
Assuming that words are chosen uniformly
at random from $\al^\ell$, the maximality index is the probability that a randomly chosen word $w$ of length $\ell$ cannot be added into $C$ preserving its being \err-detecting, that is, 
\[
\mind{C}{\err}\> =\>\prob{w \mbox{ cannot be added into $C$}}.
\]
\end{enumerate}
\end{lemma}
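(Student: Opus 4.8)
The plan is to derive both items directly from the characterizations already established, treating each separately, so the proof is essentially bookkeeping on top of (\ref{eq:maxed}) and Definition~\ref{def:addword}.

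For item~1, I would begin from Definition~\ref{def:maxind}. Since $\al$ is a finite alphabet and $\ell$ a positive integer, $\al^\ell$ is a nonempty finite set, so $|\al^\ell|$ is a positive integer, and $\al^\ell\cap(\err\lor\erri)(C)\sse\al^\ell$ forces $0\le\mind{C}{\err}\le 1$. Hence $\mind{C}{\err}=1$ holds exactly when the numerator equals the denominator, i.e.\ $|\al^\ell\cap(\err\lor\erri)(C)|=|\al^\ell|$; by finiteness this is equivalent to $\al^\ell\sse(\err\lor\erri)(C)$, that is, to $\al^\ell\sm(\err\lor\erri)(C)=\es$. By statement~(\ref{eq:maxed}), this last condition is equivalent to $C$ being maximal $\err$-detecting, which closes the chain of equivalences.

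For item~2, fix the uniform distribution on $\al^\ell$. By Definition~\ref{def:addword}, a word $w$ can be added into $C$ iff $w\notin(\err\lor\erri)(C)$, hence $w$ cannot be added into $C$ iff $w\in(\err\lor\erri)(C)$. Restricting attention to $w\in\al^\ell$, the event ``$w$ cannot be added into $C$'' is exactly the set $\al^\ell\cap(\err\lor\erri)(C)$, whose probability under the uniform law is $|\al^\ell\cap(\err\lor\erri)(C)|/|\al^\ell|=\mind{C}{\err}$. I would also remark on the mild edge case $w\in C$: since $\err$ is input-preserving, $w\in\err(w)\sse(\err\lor\erri)(C)$, so such a $w$ is correctly counted among the words that cannot be added, consistently with the fact that adjoining it to $C$ is vacuous.

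I do not expect a real obstacle here. The only points needing a word of care are: in item~1, the passage from ``the ratio equals $1$'' to the inclusion $\al^\ell\sse(\err\lor\erri)(C)$, which uses finiteness of $\al^\ell$ (equal cardinalities of a set and a subset force equality); and in item~2, making sure the complement is taken inside $\al^\ell$ rather than inside $\al^*$, so that the count appearing is precisely the numerator of $\mind{C}{\err}$.
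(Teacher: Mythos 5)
Your proof is correct and follows essentially the same route as the paper: item~1 from Definition~\ref{def:maxind} together with statement~(\ref{eq:maxed}), and item~2 by identifying the event ``$w$ cannot be added'' with the set $\al^\ell\cap(\err\lor\erri)(C)$ under the uniform distribution. The extra care you take with finiteness and with the input-preserving edge case is fine but not a departure from the paper's argument.
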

\begin{proof} 
The first statement follows from Definition~\ref{def:maxind} and condition~(\ref{eq:maxed}). The second statement follows when we note that the event that a randomly chosen word $w$ from $\lblock$ cannot be added into $C$ is the same as  the event that $w\in\lblock\cap(\err\lor\erri)(C)$.
\end{proof}
%
%

\section{Generating Error Control Codes}\label{sec:main}
We turn now our attention to algorithms processing channels
and sets of words. A set of words is called a \emph{language},
with a block code being a particular example of language.
A powerful method of representing languages is via finite 
automata \cite{FLhandbookI}. 
A (finite) \emph{automaton} $\aut$ is a 5-tuple 
$(S,\al,I,T,F)$ as in the case of a channel, but each transition
has only an input label, that is, it is of the form
$(s,x,t)$ with $x$ being one alphabet symbol or the empty word $\ew$. The \emph{language accepted by} $\aut$ is denoted
by $\lang{\aut}$ and consists of all words formed by concatenating the labels in any path from an initial to a final state. 
The automaton is called \emph{deterministic}, or \emph{DFA} for short, if $I$ consists of a single state, there are no transitions with label $\ew$, and there are no two distinct transitions with same labels going out of the same state.
Special cases of automata are 
constraint systems in which normally all states 
are final 
(pg 1635--1764 of \cite{Coding:1998}), and trellises. A \emph{trellis} is
an automaton accepting a
block code, and has one initial and one final state 
(pg 1989--2117 of \cite{Coding:1998}).
In the case of a
trellis $\aut$ we talk about the \emph{code represented by} $\aut$, and we denote it as $\code{\aut}$, which is equal to 
$\lang{\aut}$. 
\pnsi
For computational complexity considerations, the \emph{size} $|\autm|$ of a finite state
machine (automaton or transducer) $\autm$ is the number of states
plus the sum of the sizes of the transitions. The size of a transition is 1 plus the length of the label(s) on the
transition. We assume that the alphabet $\al$ is small
so we do not include its size in our estimates.
\pnsi
An important operation between an
automaton $\aut$ and a transducer $\trt$, here denoted by `$\filter$', returns an
automaton $(\aut\filter\trt)$ 
that accepts the set of 
all possible outputs of $\trt$ when the input 
is any word from $\lang{\aut}$, that is,
\[
\quad\lang{\aut\filter\trt} = \trt(\lang{\aut}).
\]
\begin{remark}\label{rem:product}
We recall here the construction of $(\aut\filter\trt)$ from given $\aut=(S_1,A,I_1,T_1,F_1)$ and $\trt=(S_2,A,I_2,T_2,F_2)$, where we assume that $\aut$ contains no transition with label $\e$. 
First, if necessary, we convert $\trt$ to standard form.
Second, if $\trt$ contains any transition whose input label is $\e$, then we add into $T_1$ transitions $(q,\e,q)$, for all states $q\in S_1$. 
 Let $T_1$ denote now the updated set of transitions. Then, we construct the automaton
\[
\autb\>=\>(S_1\times S_2,\, A,\, I_1\times I_2,\, T,\, F_1\times F_2)
\]
such that $((p_1,p_2),y,(q_1,q_2))\in T$, exactly when there are transitions $(p_1,x,q_1)\in T_1$ and $(p_2,x/y,q_2)\in T_2$.
The above construction can be done in time $O(|\aut||\trt|)$ and the size of $\autb$ is $O(|\aut||\trt|)$. The required automaton $(\aut\filter\trt)$ is the trim version of $\autb$, which can be computed in time $O(|\autb|)$. (The trim version of an automaton $\autm$ is the automaton resulting when we remove any states of $\autm$ that do not occur in some path from an initial to a final state of $\autm$.)
\end{remark}
%
%
\begin{center}
\begin{figure}[ht]
\begin{center}
\parbox{0.85\textwidth}
{
\hspace*{0.3\algoindent} 
\nextword\, ($\chann, \aut, \ff, \ef$)
\pssn \hspace*{\algoindent}
\autb\, := $(\aut\filter(\chann\lor\chann^{-1}))$;
\\ \hspace*{\algoindent}
$n$ := 1 + $\Big\lfloor\,1/\Big(4\ef(1-\ff)^2\Big)\,\Big\rfloor$;
\\ \hspace*{\algoindent}
$\ell$ := the length of the words in $\code{\aut}$;
\\ \hspace*{\algoindent}
tr := 1;
\\ \hspace*{\algoindent}
while (tr \mle $n$):
\\\hspace*{1.7\algoindent} 
$w$ := $\pickfrom(\al,\ell)$;
\\ \hspace*{1.7\algoindent} 
if ($w$ not in $\lang{\autb}$) return $w$;
\\ \hspace*{1.7\algoindent} 
tr := tr+1;
\\ \hspace*{\algoindent}
return \none;
}
\caption{Algorithm \nextword---see Theorem~\ref{th:nonmaxw}.}\label{fig:algo1}
\end{center}
\end{figure}
\end{center}
%
%
Next we present our randomized algorithms---we use 
\cite{MiUp:2005} as 
reference for basic concepts. 
We assume that we have available to use in our algorithms
an ideal method $\pickfrom(\al,\ell)$ that chooses 
uniformly at random
a word in $\al^\ell$. 
A randomized algorithm $R(\cdots)$ 
with specific values for its parameters can be viewed as
a random variable whose value is whatever value is returned by 
executing $R$ on the specific values.
\begin{theorem}\label{th:nonmaxw}
Consider the algorithm \nextword\, in Fig.~\ref{fig:algo1}, which
takes as input a channel $\err$, a trellis $\aut$ accepting an \err-detecting code,
and two numbers $\ff,\ef\in[0,1]$.
\begin{enumerate}
\item
The algorithm either returns a word $w\in\al^\ell\setminus\code{\aut}$ such that the code 
$\code{\aut}\cup\{w\}$ is 
$\chann$-detecting, or it returns \none. 
\item
If $\code{\aut}$ is not $\ff$-maximal $\chann$-detecting, then
$$
\prob{\mbox{\nextword\, returns \none}}< \ef.
$$
\item
The time complexity of \nextword\, is 
\[O\Big(\ell|\aut||\chann|\Big/(\ef(1-\ff)^2)\Big).\]
\end{enumerate}
\end{theorem}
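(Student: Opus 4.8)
The plan is to analyze the three claims in order, as they are essentially independent once the right invariants are in place.

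For part~(1), I would unwind the definition of the automaton $\autb = (\aut\filter(\chann\lor\chann^{-1}))$. By the defining property of the operation $\filter$ recalled just before Remark~\ref{rem:product}, we have $\lang{\autb} = (\chann\lor\chann^{-1})(\lang{\aut}) = (\chann\lor\chann^{-1})(\code{\aut})$. The algorithm returns a word $w$ only when $w\in\al^\ell$ (since $w$ is produced by $\pickfrom(\al,\ell)$) and $w\notin\lang{\autb}$, i.e.\ $w\notin(\chann\lor\chann^{-1})(\code{\aut})$. Since $\code{\aut}\sse(\chann\lor\chann^{-1})(\code{\aut})$ (because $\chann$ is input-preserving, so each codeword is an output on itself), such a $w$ is automatically not in $\code{\aut}$. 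Then Definition~\ref{def:addword} and statement~(\ref{eq:addword}) give that $\code{\aut}\cup\{w\}$ is $\chann$-detecting. The only other exit point returns \none, so part~(1) follows.

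For part~(2), suppose $\code{\aut}$ is not $\ff$-maximal $\chann$-detecting, i.e.\ by Definition~\ref{def:maxind} $\mind{\code{\aut}}{\chann} < \ff$. By Lemma~\ref{lem:maxind}(2), the probability that a single $\pickfrom(\al,\ell)$ call yields a word that \emph{cannot} be added equals $\mind{\code{\aut}}{\chann}$, so the probability it yields an addable word (hence a word the algorithm will return on that iteration) is $1 - \mind{\code{\aut}}{\chann} > 1-\ff$. The $n$ iterations use independent draws, so the algorithm returns \none{} only if all $n$ draws fail, which has probability $(\mind{\code{\aut}}{\chann})^n \le \ff^n$. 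To finish I want $\ff^n < \ef$; with $n = 1 + \lfloor 1/(4\ef(1-\ff)^2)\rfloor$ I would use the elementary bound $\ff^n = (1-(1-\ff))^n \le e^{-n(1-\ff)} \le e^{-(1-\ff)/(4\ef(1-\ff)^2)} = e^{-1/(4\ef(1-\ff))}$, and check this is below $\ef$ for the relevant parameter range; this is the one genuinely computational step, and I expect it to be the main obstacle — the constant $4$ and the squared factor $(1-\ff)^2$ in $n$ suggest the authors may instead route the estimate through a variance/Chebyshev-style argument (a Bernoulli indicator with success probability $>1-\ff$ has variance $<1/4$, and $n$ independent trials concentrate), which is why $\ef$ appears linearly rather than logarithmically; I would follow whichever bound the inequality actually needs, but the shape of $n$ strongly points to the Chebyshev route applied to the count of addable words among the $n$ draws.

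For part~(3), I would tally the costs. Building $\chann\lor\chann^{-1}$ is $O(|\chann|)$, and by Remark~\ref{rem:product} constructing and trimming $\autb = (\aut\filter(\chann\lor\chann^{-1}))$ costs $O(|\aut|\,|\chann|)$ with $|\autb| = O(|\aut|\,|\chann|)$. The while loop runs at most $n = O\big(1/(\ef(1-\ff)^2)\big)$ times; each iteration calls $\pickfrom(\al,\ell)$ in $O(\ell)$ time and tests membership of a length-$\ell$ word in $\lang{\autb}$, which — running the word against the (possibly $\ew$-transition-containing) automaton $\autb$ — costs $O(\ell\,|\autb|) = O(\ell\,|\aut|\,|\chann|)$. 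Multiplying the per-iteration cost by $n$ dominates the setup cost and yields total time $O\big(\ell\,|\aut|\,|\chann|\,/\,(\ef(1-\ff)^2)\big)$, as claimed.
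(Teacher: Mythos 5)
Your treatments of parts (1) and (3) are correct and coincide with the paper's: part (1) is exactly the appeal to statement~(\ref{eq:addword}) via $\lang{\autb}=(\err\lor\erri)(\code{\aut})$, and part (3) is the same cost accounting through Remark~\ref{rem:product}.

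Part (2), as written, has a gap. Your starting point is fine --- the $n$ draws are i.i.d., so $\prob{\mbox{\nextword{} returns \none}}=p^n$ with $p=\mind{\code{\aut}}{\err}<\ff$ --- but the chain you propose for $\ff^n<\ef$ does not close. The bound $\ff^n\le e^{-n(1-\ff)}\le e^{-1/(4\ef(1-\ff))}$ is below $\ef$ only when $4\ef(1-\ff)\ln(1/\ef)\le 1$; since $\max_{\ef}\,\ef\ln(1/\ef)=1/e$, this can fail whenever $\ff<1-e/4\approx 0.32$ (e.g.\ $\ff=0.1$, $\ef=e^{-1}$ gives $e^{-1/(4\ef(1-\ff))}\approx 0.47>\ef\approx 0.37$). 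The target inequality $\ff^n<\ef$ is in fact true for all $\ff,\ef\in(0,1)$ --- one can check $(-\ln\ff)/(1-\ff)^2\ge 2.45\ldots>4/e\ge 4\ef\ln(1/\ef)$ --- but that requires a sharper estimate than $\ln\ff\le-(1-\ff)$, so as written the step is unproved. Your guess about the paper's actual route is correct: the paper takes $X$ to be the (binomial) number of the $n$ samples that land in $\lang{\autb}$, applies Chebyshev with $a=n(1-\ff)$ and $p(1-p)\le 1/4$ to get $\prob{\,|X/n-p|\ge 1-\ff\,}\le 1/(4n(1-\ff)^2)<\ef$ by the choice of $n$, and then bounds $\prob{X=n}=\prob{X/n-p=1-p}\le\prob{\,|X/n-p|\ge 1-\ff\,}$ using $p<\ff$. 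That is what the constant $4$ and the factor $(1-\ff)^2$ in the definition of $n$ are calibrated for; your $p^n$ bound is actually tighter than Chebyshev, but to use it you would still owe a complete proof that $\ff^{1/(4\ef(1-\ff)^2)}<\ef$.
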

\newcommand\Cnt{\ensuremath{\mathtt{Cnt}}}
\newcommand\Tr{\ensuremath{\mathtt{Tr}}}
\begin{proof}
The first statement follows from statement~(\ref{eq:addword})
in the previous section, as any $w$ returned by the algorithm is not
in $(\err\lor\erri)(\code{\aut})$. For the second statement,
suppose that the  code $\code{\aut}$ is not $\ff$-maximal $\chann$-detecting. 
Let  \Cnt\, be the random variable whose value is the value of  tr $-$ 1 at the end of execution of the randomized algorithm \nextword. Then, \Cnt\, counts the number of words that are in $\code{\aut}$ out of $n$ randomly chosen words $w$. 
Thus \Cnt\, is binomial: the number of successes 
(words $w$  in $\lang{\autb}$) in $n$ trials. So $E(\Cnt) = np$, 
where $p=\prob{w\in\lang{\autb}}$. 
By the definition of $n$ in \nextword, 
we get $1/(4n(1-\ff)^2)<\epsilon$.
Now consider the Chebyshev inequality, 
$
\prob{|X-E(X)|\ge a}\le \sigma^2/a^2,
$
where $a>0$ is arbitrary and $\sigma^2$
is the variance of some random variable $X$. For $X=\Cnt$
the variance is $np(1-p)$, and we get 
\[
\prob{\,|\Cnt/n-p|\ge 1-\ff\,}<\ef,
\]
where we  used $a=n(1-\ff)$ and the fact that 
$p(1-p)\le1/4$. 
\pnsi
Using Lemma~\ref{lem:maxind} and the assumption that
$\code{\aut}$ is not $\ff$-maximal, we have that 
$\mind{\code{\aut}}{\err}<\ff$, which implies
$\prob{w\in\lang{\autb}}<\ff$; 
hence, $p<f$. Then
\begin{align*}
\prob{\mbox{ \nextword\, returns \none\;}} 
&=  \prob{\Cnt=n} =\\
\prob{\Cnt/n=1}
&= \prob{\Cnt/n-p=1-p} \le \\
\prob{|\Cnt/n-p|\ge1-p}
&\le \prob{|\Cnt/n-p|\ge1-\ff}\>
< \ef,
\end{align*}
as required.
\par
For the third statement, we use standard results from automaton theory, \cite{FLhandbookI}, and Remark~\ref{rem:product}. In particular,
computing $\autb$ can be done in time 
$O(|\aut|\cdot|\chann|)$  such that 
$|\autb|=O(|\aut|\cdot|\chann|)$.
Testing whether $w\in\lang{\autb}$ can be done in time 
$O(|w||\autb|)=O(\ell|\autb|)$.
Thus, the algorithm works in time
$
O(\ell|\aut||\chann|\,/(\ef(1-\ff)^2)).
$
\end{proof}
\begin{remark}
We mention the important observation that one can modify
the algorithm \nextword\, by removing the construction of
\autb\, and replacing the `if' line in the loop with
\pnsi 
\quad if ($\code{\aut}\cup\{w\}$ is $\err$-detecting) return $w$;
\pnsn
While with this change the output would still be correct,
the time complexity  of the algorithm would increase
to $O\Big(|\aut|^2|\chann|\Big/\big(\ef(1-\ff)^2\big)\Big)$. 
This is because testing whether $\lang{\autv}$ is
$\err$-detecting, for any given automaton $\autv$ and 
channel $\err$, can be done in time $O(|\autv|^2|\chann|)$, and
in practice $|\autv|$ is much larger than $\ell$.
\end{remark}
%
%
In Fig.~\ref{fig:algo2}, we present the main algorithm
for adding new words into a given deterministic trellis $\aut$.
%
%
\begin{figure}[ht]
\begin{center}
\parbox{0.85\textwidth}{
\mbox{}\\
\hspace*{0.3\algoindent} 
\makecode\, ($\chann, \aut, N$)
\pssn \hspace*{\algoindent} 
$W$ := empty list;\quad \autc := \aut
\\ \hspace*{\algoindent}
cnt := 0; \quad  more := True;
\\ \hspace*{\algoindent}
while (cnt \mlt $N$ \texttt{and}\; more)
\\ \hspace*{1.75\algoindent} 
$w$ := \nextword\,($\chann, \autc, \fv, \ev$);
\\ \hspace*{1.75\algoindent} 
if ($w$ is \none) more := False;
\\ \hspace*{1.75\algoindent} 
else \{add $w$ to \autc\, and to $W$; \; cnt := cnt+1;\}
\\ \hspace*{\algoindent} 
 return \autc,\, $W$;
}
\caption{Algorithm \makecode---see Theorem~\ref{th:main}.
The trellis $\aut$ can be omitted
so that the algorithm would start with an empty set of
codewords. In this case,
however, the algorithm would require as extra input
the codeword length $\ell$ and the desired alphabet 
$\al$. We used the fixed values 0.95 and 0.05, as they
seem to work well in practical testing.
}\label{fig:algo2}
\end{center}
\end{figure}
\begin{remark}
In some sense, algorithm \makecode\, generalizes to arbitrary channels the idea used in the proof of the well-known Gilbert-Varshamov bound~\cite{MacWilliams:Sloane} for the largest possible block code $M\sse\lblock$ that is $\chsub k$-correcting, for some number $k$ of substitution errors. In that proof, a word can be added into the code $M$ if the word is outside of the union of the ``balls'' $\chsub{2k}(u)$, for all $u\in M$. In that case, we have that $\chsub k^{-1}=\chsub k$ and $(\chsub k^{-1}\circ\chsub k)=\chsub{2k}(u)$. The present algorithm
adds new words $w$ to the constructed trellis $\autc$ such that each new word $w$ is outside of the ``union-ball'' $(\err\lor\erri)(\code{\autc})$.
\end{remark}
\begin{theorem}\label{th:main}
Algorithm \makecode\, in Fig.~\ref{fig:algo2} 
takes as input a channel $\err$, 
a deterministic trellis $\aut$ of some length $\ell$, and an integer $N>0$ such that
the code $\code{\aut}$ is \err-detecting, and
returns a deterministic 
trellis $\autc$
and a list $W$ of words 
such that the following statements hold true:
\begin{enumerate}
\item 
$\code{\autc}=\code{\aut}\cup W$ and $\code{\autc}$ is 
\chann-detecting, 
\item
If $W$ has less than $N$ words, then 
either 
$\mind{\code{\autc}}{\err}\ge\fv$
or
the probability that a randomly chosen word from
$\al^\ell$ can be added in $\code{\autc}$ is $<\ev$.
\item
The algorithm runs in time 
$O\Big(\ell N|\err||\aut|+\ell^2N^2|\err|\Big)$.
\end{enumerate}
\end{theorem}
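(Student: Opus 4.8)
The plan is to establish the three parts of Theorem~\ref{th:main} in turn, using Theorem~\ref{th:nonmaxw} as the workhorse for Parts~1 and~2 and a size-accounting argument for Part~3. The central object is the loop invariant: whenever control reaches the top of the \texttt{while}-loop, $\autc$ is a deterministic trellis of the same length $\ell$ as $\aut$, with $\code{\autc}=\code{\aut}\cup W$, and this code is \err-detecting. Parts~1 and~2 are then read off from this invariant at termination.

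\emph{Part 1.} I would prove the invariant by induction on the number of completed loop iterations; it holds initially because $\autc=\aut$ and $W$ is empty. In the inductive step the loop body calls \nextword$(\err,\autc,\fv,\ev)$, which by Theorem~\ref{th:nonmaxw}(1) returns either \none{} --- in which case \texttt{more} becomes \texttt{False}, the loop terminates, and $\autc,W$ are left unchanged --- or a word $w\in\al^\ell\setminus\code{\autc}$ with $\code{\autc}\cup\{w\}$ being \err-detecting, which the algorithm adds to $\autc$ and appends to $W$. Here I would justify that this update keeps $\autc$ a deterministic trellis and costs $O(\ell)$ time: in a trim trellis for a length-$\ell$ block code every state has a well-defined depth and the unique final state $f$ is the unique state of depth $\ell$, so walking from the initial state along the transitions labelled $w_1,w_2,\dots$ must get stuck at some depth $j\le\ell-1$ at a state $p$ having no transition on $w_{j+1}$ (otherwise $w\in\code{\autc}$, a contradiction); splicing in a fresh chain of $\ell-1-j$ new states from $p$ to $f$ reading $w_{j+1}\cdots w_\ell$ then preserves determinism and the one-initial/one-final trellis shape, accepts exactly $\code{\autc}\cup\{w\}$, and enlarges $|\autc|$ by $O(\ell)$. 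Part~1 is the invariant at termination.

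\emph{Part 2.} Suppose $W$ has fewer than $N$ words at termination. Since \texttt{cnt} tracks $|W|$ and the loop stops as soon as \texttt{cnt} reaches $N$, the exit must have been caused by \texttt{more} being set to \texttt{False}, \ie the last call \nextword$(\err,\autc,\fv,\ev)$ returned \none. Apply the contrapositive of Theorem~\ref{th:nonmaxw}(2) to that call: if $\code{\autc}$ is not $\fv$-maximal \err-detecting, then \nextword{} returns \none{} with probability $<\ev$. Hence either $\mind{\code{\autc}}{\err}\ge\fv$, or we are in the complementary case, which by Lemma~\ref{lem:maxind}(2) --- where $1-\mind{\code{\autc}}{\err}$ is exactly the probability that a uniformly random word of length $\ell$ can be added into $\code{\autc}$ --- translates into the statement that this probability is below $\ev$, with the bound $\ev$ coming from that single application of Theorem~\ref{th:nonmaxw}(2) on the final iteration. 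This gives the dichotomy of Part~2.

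\emph{Part 3.} The loop runs at most $N$ times, because \texttt{cnt} increases by one whenever a word is added and the loop ends once \texttt{cnt} reaches $N$. By Part~1 each added word enlarges $|\autc|$ by $O(\ell)$, so before the $j$-th iteration $|\autc|=O(|\aut|+(j-1)\ell)$. The cost of the $j$-th iteration is dominated by its call \nextword$(\err,\autc,\fv,\ev)$, which by Theorem~\ref{th:nonmaxw}(3) takes time $O\big(\ell\,|\autc|\,|\err|\,/\,(\ev(1-\fv)^2)\big)$; since $\fv=0.95$ and $\ev=0.05$ are fixed, this is $O\big(\ell\,|\err|\,(|\aut|+(j-1)\ell)\big)$, and the $O(\ell)$ used to update $\autc$ and $W$ is absorbed. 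Summing over $j=1,\dots,N$ yields $O\big(\ell N|\err||\aut|+\ell^2|\err|\sum_{j\le N}j\big)=O\big(\ell N|\err||\aut|+\ell^2N^2|\err|\big)$, as claimed. The two places needing care are the word-insertion update --- preserving the deterministic-trellis invariant in $O(\ell)$ time and $O(\ell)$ added size without first expanding an arbitrary input trellis, which is exactly what makes the $\ell^2N^2$ term come out right --- and the precise phrasing of Part~2, whose second alternative is not a deterministic property of the output but inherits the error probability $\ev$ from the last use of Theorem~\ref{th:nonmaxw}(2).
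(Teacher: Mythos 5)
Your overall architecture is the same as the paper's proof: Part~1 via a loop invariant and Theorem~\ref{th:nonmaxw}(1), Part~2 by applying Theorem~\ref{th:nonmaxw}(2) to the terminating call of \nextword{} (and your reading of the second alternative as a confidence statement about that last call returning \none{} with probability $<\ev$, rather than a literal claim about $1-\mind{\code{\autc}}{\err}$, matches what the paper's own proof actually establishes), and Part~3 by the summation $\sum_{i=1}^N O\bigl(\ell|\err|(|\aut|+i\ell)\bigr)$ based on the observation that each insertion grows $\autc$ by $O(\ell)$.

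The one genuine problem is in the word-insertion construction you supply for Part~1 --- a step the paper merely asserts costs $O(\ell)$. Splicing a fresh chain from the state $p$ where the deterministic walk on $w$ gets stuck directly to the final state $f$ does not in general yield $\code{\autc}\cup\{w\}$, because in a deterministic trellis a state of depth $j$ can be reachable by several distinct words of length $j$. Concretely, take the deterministic trellis with states $s,p,f$ and transitions $(s,0,p)$, $(s,1,p)$, $(p,0,f)$, so $\code{\autc}=\{00,10\}$, and insert $w=01$: the walk stops at $p$, your splice adds $(p,1,f)$, and the new trellis also accepts $11$. This breaks the identity $\code{\autc}=\code{\aut}\cup W$ and can destroy \err-detection, so Part~1 as you argue it fails on general inputs (it is correct only when the trellis is a trie, e.g.\ when \makecode{} starts from an empty code). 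The repair is standard: before splicing, clone each state on the path for $w_1\cdots w_j$ that has other incoming transitions, copying its outgoing transitions to the clone so the path becomes private; for a fixed alphabet this still takes $O(\ell)$ time, adds $O(\ell)$ states and transitions, and preserves determinism, so your size bound $|\autc_i|=O(|\aut|+i\ell)$ and hence Part~3 go through unchanged.
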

\begin{proof}
Let $\autc_i$ be the value of the trellis $\autc$ at the end of the $i$-th iteration
of the while loop.
The first statement follows from Theorem~\ref{th:nonmaxw}: any word $w$ returned by \nextword\, is such that $\code{\autc_i}\cup\{w\}$ is \err-detecting.
For the second statement, assume that, at the end of 
execution, $W$ has $<N$ words and $\code{\autc}$
is not \fp-maximal. By the previous theorem, this means 
that the
random process $\nextword(\err,\autc,\fv,\ev)$
returns \none\, with probability $<\ev$,
as required.
For the third statement, as  the loop in the algorithm \nextword\, performs a fixed number of iterations (=\nv), 
we have that the
cost of \nextword\, is $O(\ell|\autc_i||\err|)$. 
The cost of adding a new word $w$ of length $\ell$ 
to $\autc_{i-1}$ is
$O(\ell)$ and increases its size by $O(\ell)$, 
so each $\autc_i$ is of size $O(|\aut|+i\ell)$.
Thus, the cost of the $i$-th iteration of the while loop in \makecode\, 
is $O(\ell|\err|(|\aut|+i\ell))$. As there are up to $N$
iterations the total cost is
$$
\sum_{i=1}^NO\Big(\ell|\err|\cdot(|\aut|+i\ell)\Big)\>=\>
O\Big(\ell N|\err||\aut|+\ell^2N^2|\err|\Big).
$$
\end{proof}
\begin{remark}\label{rem:time}
In the algorithm \makecode, attempting to add only one word into $\code{\aut}$ (case of $N=1$), requires time $O(\ell |\err||\aut|+\ell^2|\err|)$, which is of polynomial magnitude. This case is equivalent to testing whether $\code{\aut}$ is maximal \err-detecting, which is shown to be a hard decision problem in Theorem~\ref{th:coNP}.
\end{remark}

\begin{remark}
In the version of the algorithm \makecode\, where the initial trellis $\aut$ is
omitted, the time complexity is
$O(\ell^2N^2|\err|)$. We also note that the algorithm would work with the same time complexity if the given trellis $\aut$ is not deterministic. In this case, however, the resulting trellis would not be  (in general) deterministic either.
\end{remark}

\section{Why not Use a Deterministic Algorithm}\label{sec:coNP}
Our motivation for considering randomized algorithms is that the \emph{embedding problem} is computationally hard: given a deterministic trellis $\autd$ and a channel \err, compute (using a deterministic algorithm) a  trellis that represents a maximal \err-detecting  code containing $\code{\autd}$. By computationally hard, we mean that a decision version of the embedding problem is coNP-hard. This is shown next. 

\begin{theorem}\label{th:coNP}
	The following decision problem is coNP-hard.
\begin{description}
  \item[\textit{Instance:}] deterministic trellis $\autd$ and channel \err.
  \item[\textit{Answer:}] whether $\code{\autd}$ is maximal \err-detecting.
\end{description}
\end{theorem}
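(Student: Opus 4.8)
The plan is a polynomial-time reduction from the satisfiability problem for CNF formulas (which is \NP-complete), set up so that satisfiable formulas map to non-maximal codes and unsatisfiable ones to maximal codes; this establishes that the stated problem is coNP-hard. Everything hinges on characterization~(\ref{eq:maxed}): for an $\err$-detecting code $C\subseteq\al^\ell$, $C$ is maximal $\err$-detecting iff $\al^\ell\setminus(\err\lor\erri)(C)=\emptyset$. The trick that keeps the reduction short is to use a \emph{singleton} code, which is $\err$-detecting whatever $\err$ is, so that~(\ref{eq:maxed}) always applies and all the work can be pushed into the channel.

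Concretely, given a CNF formula $\phi$ over variables $x_1,\dots,x_n$ with clauses $c_1,\dots,c_m$, I will take the alphabet $\al=\{0,1,\#\}$, codeword length $\ell=n$, and the deterministic trellis $\autd$ accepting the singleton $\code{\autd}=\{\#^n\}$ (a path of $n$ transitions labelled $\#$). The channel $\err$ will be built so that
\[
(\err\lor\erri)(\{\#^n\})\;=\;L\,\cup\,J ,
\]
where $J=\al^n\setminus\{0,1\}^n$ is the set of length-$n$ words containing at least one $\#$, and $L=\{\,y\in\{0,1\}^n : y\ \text{makes at least one clause}\ c_j\ \text{false}\,\}$. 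Granting this, $\al^n\setminus(\err\lor\erri)(\{\#^n\})=\{0,1\}^n\setminus L$ is exactly the set of satisfying assignments of $\phi$, so by~(\ref{eq:maxed}) the code $\code{\autd}$ is maximal $\err$-detecting if and only if $\phi$ is unsatisfiable.

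The channel $\err$ will be the transducer union ($\lor$) of three gadgets with pairwise disjoint state sets: (i) a one-state identity gadget with transitions $a/a$ for every $a\in\al$, whose sole purpose is to make $\err$ input-preserving; (ii) a two-state ``junk generator'', all of whose transitions read the letter $\#$, which may emit $0$, $1$, or $\#$ at each step but is forced to emit at least one $\#$ before reaching its final state, so that on input $\#^n$ its output set is exactly $J$; and (iii) a ``clause checker'' obtained from the standard $O(nm)$-state NFA $M$ over $\{0,1\}$ that, on input $y$, nondeterministically guesses a clause $c_j$ and, reading $y$ while counting positions, verifies that $y$ falsifies $c_j$ (every literal of $c_j$ false under $y$)---replacing each transition $(q,b,q')$ of $M$ by $(q,\#/b,q')$ (and keeping any $\ew$-transitions) turns $M$ into a gadget whose output on input $\#^n$ is exactly $L$. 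The crucial feature is that every transition of gadgets (ii) and (iii) consumes $\#$, so neither produces any output on an input other than $\#^n$, and gadget (iii) never emits $\#$; hence $\err(\#^n)=\{\#^n\}\cup J\cup L=L\cup J$, and for any word $w$ the condition $\#^n\in\err(w)$ forces $w=\#^n$ (such an output could only come from the identity gadget), so $\erri(\{\#^n\})=\{\#^n\}\subseteq\err(\#^n)$ and therefore $(\err\lor\erri)(\{\#^n\})=\err(\#^n)=L\cup J$. The construction is clearly polynomial-time: $\abs{\autd}=O(n)$ and $\abs{\err}=O(nm)$.

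The step needing the most care is exactly the identity $(\err\lor\erri)(\{\#^n\})=L\cup J$: one has to check that the three gadgets jointly put \emph{no} binary word outside $L$ into $\err(\#^n)$---a single spurious satisfying assignment would wreck the equivalence---and that $\erri$ contributes nothing new, both of which follow from the observation that gadgets (ii) and (iii) only ever read $\#$. I would leave two side remarks: a purely binary-alphabet variant follows by composing with a fixed $2$-bit block code for $\{0,1,\#\}$ plus one extra junk gadget that absorbs ill-formed blocks; and the problem is actually \emph{in} coNP (a polynomial-time-checkable witness of non-maximality is any word $w\in\al^\ell\setminus(\err\lor\erri)(\code{\autd})$), hence coNP-complete.
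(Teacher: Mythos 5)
Your reduction is correct, but it takes a genuinely different route from the paper's. The paper reduces from the problem \fullblock{} of deciding whether a given trellis over $\{0,1\}$ accepts all of $\{0,1\}^\ell$ (shown coNP-complete by invoking the \NP-completeness of inequivalence of star-free regular expressions); it enlarges the alphabet by the transition set $T$ of that trellis, takes $\code{\autd}=\al^\ell\sm\al_2^\ell$, and builds a channel whose non-identity part maps each length-$\ell$ word over $T$ to the binary word it spells in the given trellis, so that $(\err\lor\erri)(\code{\autd})=(\al^\ell\sm\al_2^\ell)\,\dot\cup\,\lang{\aut}$. You instead reduce directly from CNF unsatisfiability with a singleton code $\{\#^n\}$ over the fixed alphabet $\{0,1,\#\}$, pushing all of the combinatorics into the channel (identity $\lor$ junk generator $\lor$ clause-falsification checker) so that the words that can be added are exactly the satisfying assignments. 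Both arguments hinge on characterization~(\ref{eq:maxed}) together with the freedom to design $\err$ so that $(\err\lor\erri)(\code{\autd})$ is an arbitrary \NP-describable set; your observation that a singleton code is vacuously $\err$-detecting lets you skip the paper's separate verification that $\code{\autd}$ is $\err$-detecting. Your version is more self-contained (no appeal to the regular-expression inequivalence result of Lewis--Papadimitriou) and keeps the alphabet of constant size, whereas the paper's alphabet grows with the instance; the paper's version, by factoring through \fullblock, isolates trellis universality as the combinatorial core of the hardness and reuses a known coNP-complete problem. The details you flag as delicate (that no binary word outside $L$ lands in $\err(\#^n)$, and that $\erri$ contributes nothing new) do check out, precisely because every transition of your second and third gadgets consumes $\#$ and the third gadget never emits $\#$.
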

\begin{proof}
Let us call the decision problem in question \maxed, and let \fullblock
be the problem of deciding whether a given trellis over the alphabet $\al_2=\{0,1\}$ with no $\e$-labeled transitions accepts $\al_2^\ell$, for some $\ell$.
The statement is a logical consequence of the following claims.
\pnsn  \textit{Claim 1:} \fullblock is coNP-complete.
\\  \textit{Claim 2:} \fullblock is polynomially reducible to \maxed.
\par
The first claim follows from the proof of the following fact on page 329 of \cite{LePa1998}:
Deciding whether two given star-free regular expressions over $\al_2$ are inequivalent is an NP-complete problem. Indeed, in that proof the first regular expression can be arbitrary, but the second regular expression represents the language $\al^\ell$, for some positive integer $\ell$. Moreover, converting
a star-free regular expression to an acyclic automaton with no $\e$-labeled transitions is a polynomial time problem.
\par
For the second claim, consider any trellis $\aut=(S,A_2,s,T,F)$ with no $\e$-labeled transitions in $T$. We need to construct in polynomial time an instance $(\autd,\err)$ of \maxed such that $\aut$ accepts $\al_2^\ell$ if and only if $\code{\autd}$ is a maximal \err-detecting block code of length $\ell$. The rest of the proof consists of 5 parts: construction of deterministic trellis $\autd$ accepting words of length $\ell$, construction of \err, facts about $\autd$ and \err, proving that $\code{\autd}$ is \err-detecting, proving that $\aut$ accepts $\al_2^\ell$ if and only if $\code{\autd}$ is maximal \err-detecting.
\pnsi
\textit{Construction of $\autd$:}
Let $\al$ be the alphabet $A_2\dot\cup T$, where $T$ is the set of transitions of $\aut$. The required deterministic trellis $\autd$ is any deterministic trellis accepting $\al^\ell\sm\al_2^\ell$, that is,
\[
\code{\autd}\>=\>\al^\ell\sm\al_2^\ell.
\] 
This can be constructed, for instance, by making deterministic trellises $\autd_1$ and $\autd_2$ accepting, respectively, $\al^\ell$ and $\al_2^\ell$, and then intersecting $\autd_1$ with the complement of $\autd_2$. Note  that any word in $\code{\autd}$ contains at least one symbol in $T$. 
\pnsi
\textit{Construction of $\err$:}
This is of the form $(\err_1\lor\err_2)$ as follows. The transducer $\err_2$ has only one state $s$ and transitions $(s,\alpha/\alpha,s)$, for all $\alpha\in \al$, and realizes the identity relation $\{(x,x)\mid x\in\al^*\}$. Thus, we have that $\err_2(x)=\{x\}$, for all words $x\in\al$. The transducer $\err_1=(S,A,s,T'',F)$ is such that $T''$ consists of exactly the transitions $(p,(p,a,q)/a,q)$ for which $(p,a,q)$ is a transition of $\aut$. 
\pnsi
\textit{Facts about $\autd$ and \err:} The following facts are helpful in the rest of the proof. Some of these facts refer to the deterministic trellis $\autd_1=(S,T,s,T_1,F)$ resulting by omitting the output parts of the transition labels of $\err_1$, that is, $(p,(p,a,q),q)\in T_1$ exactly when $(p,(p,a,q)/a,q)\in T''$. Then, $\code{\autd_1}\sse(\al\sm\al_2)^\ell\sse\code{\autd}$. 
    \par \noindent
      F0: $\lang{\autd_1\filter\err_1}=\lang{\aut}$.
    \\F1: The domain of $\err_1$ is $\code{\autd_1}$, a subset of $(A\sm\al_2)^\ell$.
    \\F2: If $v\in\err_1(u)$ then $v\in\al_2^\ell$ and $v\not=u$.
    \\F3: $\err_1(\code{\autd})=\lang{\aut}$.
    \\F4: $\erri_1(\code{\autd})=\emptyset$.
    \par
    For fact F0, note that the product construction described in Remark~\ref{rem:product} produces in $(\autd_1\filter\err_1)$ exactly the transitions $((p,p),a,(q,q))$, where $(p,a,q)$ is a transition in $\aut$, by matching  any transition $(p,(p,a,q),q)$ of $\autd_1$ only with the transition $(p,(p,a,q)/a,q)$ of $\err_1$. 
    Fact F1 follows by the construction of $\err_1$ and the definition of $\autd_1$: in any accepting computation of $\err_1$, the input labels appear in an accepting computation of $\autd_1$ that uses the same sequence of states.
    F3 is shown as follows: As the domain of $\err_1$ is $\code{\autd_1}$ and $\code{\autd_1}\sse\code{\autd}$, we have that $\err_1(\code{\autd})=\err_1(\code{\autd_1})$, which is $\lang{\aut}$ by F0. Fact F4 follows by noting that the domain of $\erri_1$ is a subset of $\al_2^\ell$ but $\code{\autd}$ contains no words in $\al_2^\ell$.
\pnsi
\textit{$\code{\autd}$ is \err-detecting:}
Let $u,v\in\code{\autd}$ such that $v\in\err(u)=\err_1(u)\cup\{u\}$.
We need to show that $v=u$, that is, to show that $v\notin\err_1(u)$. Indeed, if $v\in\err_1(u)$ then $v\in\al_2^\ell$, which contradicts  $v\in\code{\autd}=\al^\ell\sm\al_2^\ell$.
\pnsi
\textit{$\aut$ accepts $\al_2^\ell$ if and only if $\code{\autd}$ is maximal \err-detecting:} By statement~(\ref{eq:maxed}) we have that $\code{\autd}$ is maximal \err-detecting, if and only if $(\err\lor\erri)(\code{\autd})=\al^\ell$. We have:
\begin{eqnarray*}
(\err\lor\erri)(\code{\autd}) &=& \code{\autd}\cup\err_1(\code{\autd})\cup\erri_1(\code{\autd})\\
&=& (\al^\ell\sm\al_2^\ell)\cup \lang{\aut}\cup\emptyset\\
&=&(\al^\ell\sm\al_2^\ell)\>\dot\cup\> \lang{\aut}.
\end{eqnarray*}
Thus, $\code{\autd}$ is maximal \err-detecting, if and only if $\lang{\aut}=\al_2^\ell$, as required.
\end{proof}

\section{Implementation and Use}\label{sec:implem}
All main algorithmic tools have been implemented over
the years in the Python package FAdo \cite{Fado,AAAMR:2009,KMMR:2015}.
Many aspects of the new module FAdo.codes are presented in~\cite{KMMR:2015}.
Here we present methods of that module pertaining to generating codes.
\pnsi
Assume that the string \texttt{d1} contains a description of
the transducer $\chdel1$ in FAdo format. In particular, \texttt{d1} begins with
the type of FAdo object being described, the final states, and the initial states (after the character \texttt{*}). Then, \texttt{d1} contains the
list of transitions, with each one of the form ``$s$ $x$ $y$ $t$\texttt{\textbackslash n}'', where `\texttt{\textbackslash n}' is the new-line character. This shown in the following Python script.
\begin{verbatim}
    import FAdo.codes as codes
    d1 = '@Transducer 0 2 * 0\n'
      '0 0 0 0\n0 1 1 0\n0 0 @epsilon 1\n'
      '0 1 @epsilon 1\n1 0 0 1\n1 1 1 1\n'
      '1 @epsilon 0 2\n1 @epsilon 1 2\n'
    pd1 = codes.buildErrorDetectPropS(d1)
    a = pd1.makeCode(100, 8, 2)
    print pd1.notSatisfiesW(a)
    print pd1.nonMaximalW(a, m)
    s2 = ...string for transducer sub_2
    ps2 = codes.buildErrorDetectPropS(s2)
    pd1s2 = pd1 & ps2
    b = pd1s2.makeCode(100, 8, 2)
\end{verbatim}
The above script uses the string \texttt{d1} to create the object \texttt{pd1}
representing the $\chdel1$-detection property over the alphabet \{0,1\}. Then, 
it constructs an automaton \texttt{a} representing
a $\chdel1$-detecting block code of length $8$ 
with up to $100$ words over the 2-symbol alphabet \{0,1\}. 
The method 
\texttt{notSatisfiesW(a)} tests whether the code
$\code{\mathtt{a}}$ is $\chdel1$-detecting and returns
a witness of non-error-detection (= pair of codewords $u,v$ with $v\in\chdel1(u)$), or (\texttt{None}, \texttt{None})---of course, in the above example 
it would return (\texttt{None}, \texttt{None}). The method
\texttt{nonMaximalW(a, m)} tests whether the code
$\code{\mathtt{a}}$ is maximal $\chdel1$-detecting and returns
either a word $v\in\lang{\mathtt{m}}\setminus\code{\mathtt{a}}$ such that $\code{\mathtt{a}}\cup\{v\}$ is $\chdel1$-detecting, or \texttt{None} if $\code{\mathtt{a}}$ is already maximal.
The object \texttt{m} is any automaton---here it is
the trellis representing $\al^\ell$.
This method is used only for  small codes, as 
in general
the maximality problem is algorithmically hard (recall Theorem~\ref{th:coNP}), which
motivated us to consider the randomized version \nextword\,
in this paper. 
For any channel $\err$ and trellis \texttt{a}, the method \texttt{notSatisfiesW(a)} can be made to work in
time $O(|\err||\mathtt{a}|^2)$, which is of polynomial 
complexity. 
The operation `\verb1&1' combines error-detection properties.
Thus, the second call to \makecode\, constructs a code that
is $\chdel1$-detecting and $\chsub2$-detecting (=$\chsub1$-correcting).

\section{More on Channel Modelling, Testing}\label{sec:further}
In this section, we consider further examples of channels and show how operations on channels can result in new ones. We also show the results of testing our 
codes generation algorithm for several different channels. 
\begin{remark}\label{rem:errspecs}
We note that the definition of error-detecting (or error-correcting) block code $C$ is trivially extended to any language $L$, that is, one replaces in Definition~\ref{def:errcontrol} `block code $C$' with `language $L$'. 
	Let $\err, \err_1,\err_2$ be  channels. By Definition~\ref{def:errcontrol} and using standard logical arguments, it follows that 
\begin{enumerate}
    \item\label{rem:item:chboth} 	
	$L$ is $\err_1$-detecting and $\err_2$-detecting, if and only if $L$ is $(\err_1\lor\err_2)$-detecting;
    \item\label{rem:item:chvsinv}
    $L$ is $\erri$-detecting, if and only if it is $\err$-detecting, if and only if it is $(\erri\lor\err)$-detecting.	
\end{enumerate}	
\end{remark}
The inverse of $\chdel1$ is $\chins1$ and is shown in 
Fig.~\ref{fig:ed}, where recall it results by simply exchanging
the order of the two words in all the labels in $\chdel1$. 
By statement~2 of the above remark, the 
$\chdel1$-detecting codes are the same as the $\chins1$-detecting 
ones, and the same as the $(\chdel1\lor\chins1)$-detecting ones---this is shown in \cite{PAF:2013} as well. 
The method of using transducers
to model channels is quite general and one can give many more
examples of past channels as transducers, as well as channels not studied before. Some further examples are shown in the next figures, Fig.~\ref{fig:ed}-\ref{fig:solid}. 
\par
One can go beyond the
classical error control properties and define certain synchronization properties via transducers. Let $\ofw$ be the set of all overlap-free words, that is, all words $w$ such that a proper and nonempty prefix of $w$ cannot be a suffix of $w$. A block code $C\sse\ofw$ is a \emph{solid code} if any proper and nonempty prefix of a $C$-word cannot be a suffix of a $C$-word. For example, \{0100, 1001\} is not a block solid code, as 01 is a prefix and a suffix of some codewords and 01 is nonempty and a proper prefix (shorter than the codewords). Solid codes can also be non-block codes by extending appropriately the above definition~\cite{Shyr:book} (they are also called \emph{codes without overlaps} in~\cite{Levenshtein:73}). The transducer $\chov$ in Fig.~\ref{fig:solid} is such that any block code $C\sse\ofw$
is a solid code, if and only if $C$ is an `$\chov$-detecting' block code. We note that solid codes have instantaneous synchronization capability (in particular all solid codes are comma-free codes)
as well as synchronization in the presence of noise~\cite{JurgYu90}. 
\begin{figure}[ht!]
\begin{transducer}
\node [state,initial,accepting] (q0) {$0$};
	\node [node distance=1.00cm,left=of q0,anchor=east] 
	      {$\chbsid2=\>$}; 
	\node [above of=q0,node distance=1.00cm](tmp1){};
	\node [state,right of=tmp1] (q0a) {$0a$};
	\node [state,above of=q0a,node distance=2.00cm] (q0b) {$0b$};
	\node [state,accepting,right of=q0a] (q1) {$1$};
	\node [below of=q0,node distance=2.00cm](tmp2){};
	\node [state,right of=tmp2,node distance=1.10cm] (q1a) {$1a$};
	\node [state,below of=q0a,node distance=3.00cm] (q1b) {$1b$};
	\node [state,accepting,below of=q1] (q2) {$2$};
\path 
		(q0) edge [bend right=15] node [below] {$\e/\symb,\>\symb/\e$} (q1)
		(q0) edge node [above] {$0/1$} (q0a)
		(q0a) edge node [above] {$1/0$} (q1)
		(q0) edge node [left] {$1/0$} (q0b)
		(q0b) edge node [above] {$0/1$} (q1)
		(q1) edge node [below] {$0/1$} (q1a)
		(q1) edge [bend left=10] node [below] {$1/0$} (q1b)
		(q1a) edge [bend right=30] node [below] {$1/0$} (q2)
		(q1b) edge node [below] {$0/1$} (q2)
		(q1) edge node [below] {$\ew/\symb,\>\symb/\ew$} (q2)
		(q0) edge [loop above] node [above] {$\symb/\symb$} ()
		(q1) edge [loop above] node [above] {$\symb/\symb$} ()
		(q2) edge [loop right] node [above] {$\symb/\symb$} ();
\end{transducer}
\begin{center}
\caption{The channel specified by $\chbsid{2}$ allows up to two errors in the input word. Each of these errors can be a deletion, an insertion, or a bit shift: a 10 becomes 01, or a 01 becomes 10. The alphabet is \{0, 1\}.
}
\label{fig:ed}
\end{center}
\end{figure}
\begin{figure}[ht!]
\begin{transducer}
	\node [state,initial] (q0) {$s_0$};
	\node [node distance=0.75cm,above=of q0,anchor=east] (n0)
	      {$\chsegd4=$};
	\node [node distance=0.75cm,below=of q0,anchor=east] (n1)
	      {};
	\node [state,node distance=2.25cm,right of=n0] (p1) {$s_1$};
	\node [state,node distance=1.75cm,right of=n1] (q1) {$t_1$};
	\node [state,right of=p1] (p2) {$s_2$};
	\node [state,right of=q1] (q2) {$t_2$};
	\node [node distance=1.25cm,above=of p1] (nn0)
	      {to state $t_1$};
	\node [node distance=1.25cm,below=of q1] (nn1)
	      {to state $s_1$};
	\node [node distance=2.25cm,state,right of=p2] (p3) {$s_3$};
	\node [node distance=2.25cm,state,right of=q2] (q3) {$t_3$};
	\node [state,accepting,node distance=1.5cm,above=of p2] (m2)
	      {$f_0$};
	\node [state,accepting,node distance=1.5cm,below=of q2] (n2)
	      {$f_1$};
	\path 
	      (q0) edge node [below] {$\symb/\ew$} (q1)
		  (q0) edge node [below] {$\symb/\symb$} (p1)
	      (q1) edge node [below] {$\symb/\symb$} (q2)
		  (p1) edge node [below] {$\symb/\symb$} (p2)
		  (p1) edge node [below] {$\symb/\ew$} (q2)
	      (q2) edge node [below] {$\symb/\symb$} (q3)
		  (p2) edge node [below] {$\symb/\symb$} (p3)
		  (p2) edge node [below] {$\symb/\ew$} (q3)

	    (p3) edge node [above] {$\quad\quad\symb/\symb,\,\symb/\ew$} (m2)
		  (q3) edge node [below] {$\symb/\symb$} (n2)
	    (m2) edge node [above] {$\symb/\symb$} (p1)
		(n2) edge node [below] {$\symb/\ew$} (q1)
	    (m2) edge node [above] {$\symb/\ew$} (nn0)
		(n2) edge node [below] {$\symb/\symb$} (nn1)
        ;
\end{transducer}
\begin{center}
\caption{
Transducer for the segmented deletion channel of~ \cite{LiMi:2007} with parameter $b=4$. 
In each of the length $b$ consecutive segments of the 
input word, at most one deletion error occurs. The length of the
input word is a multiple of $b$. By Lemma~\ref{lem:edvsec}, 
$\chsegd4$-correction is equivalent to 
$(\chsegd4^{-1}\circ\chsegd4)$-detection.
\label{fig:segmented}
}
\end{center}
\end{figure}
\begin{figure}[ht!]
\begin{transducer}
	\node [state,initial] (q0) {$0$};
	\node [node distance=1.00cm,left=of q0,anchor=east] 
	      {$\chov=\>$};
	\node [state,accepting,right of=q0] (q1) {$1$};
	\node [state,accepting,right of=q1] (q2) {$2$};
	\path (q0) edge [loop above] node [above] {$\symb/\e$} ()
		(q0) edge node [above] {$\symb/\symb$} (q1)
		(q1) edge node [above] {$\e/\symb$} (q2)
		(q1) edge [loop above] node [above] {$\symb/\symb$} ()
		(q2) edge [loop above] node [above] {$\e/\symb$} ();
\end{transducer}
\begin{center}
\caption{This input-preserving transducer  deletes a prefix of the input word (a possibly empty prefix) and then  inserts a possibly empty suffix at the end of the input word. 
}
\label{fig:solid}
\end{center}
\end{figure}
\pnsi
For $\ef=\ev$ and $\ff=\fv$, the value of 
$n$ in \nextword\, is \nv. We performed several executions
of the algorithm \makecode\, on various channels using
$n=\nv$, no initial trellis, 
and alphabet $\al=\{0,1\}$.
%
%
\begin{table*}[t]
\begin{center}
\begin{tabular}{l|ccccc}
\toprule
$N\!\!=$, \!$\ell\!\!=$, end=\!& $\chid2$ & $\chdel1$ & $\chsub2$ &$\chbsid{2}$ & $\chov$\\  \midrule 
$100,\>\> 8,\,$   & $18, 20, 23$ & $37, 42, 51$  & $15, 16, 18$ & $17,19,21$ & $01,07,08$\\  
$100,\>\>  7,\,$  & $10, 12, 13$ & $20, 23, 28$ & $09, 10, 13$ & $11,11,13$ & $03,04,05$\\  
$100,\>\>  8,\,\>\>\>\>$ 1 & $11, 13, 14$ & $39, 50, 64$ & $09, 10, 11$ & $09,12,13$ & $01,05,06$\\  
$100,\>\>  8,\,\>\>$ 01& $06, 07, 08$ & $64, 64, 64$ & $04, 06, 08$ & $06,07,09$ & $01,04,05$\\  
$500,\>\, 12,\,$   & $177,182,188$ & $500,500,500$ & $148,157,162$ &$169, 173, 178$ & $51,59,63$\\ 
$500,\>\, 13,\,$   & $318,327,334$ &  & $272,273,278$ & $302,303,309$ & $43,111,120$\\ 
\bottomrule  
\end{tabular}
\end{center}
\end{table*}
%
In the above table, the first column gives the values of 
$N$ and $\ell$, and if present and nonempty, the pattern that all
codewords should end with (1 or 01). 
For each entry in an `$N=100$' row, we executed \makecode\, 21 times 
and reported 
smallest, median, and largest sizes of the 21 generated codes.
For $N$ = 500, we reported the same figures by executing the
algorithm 5 times.
For example, the entry 37,42,51 corresponds to executing
\makecode\, 21 times for $\err=\chdel1$, $\ell=8$, 
end = $\ew$. The entry 64,64,64 corresponds to the
systematic code of \cite{PAF:2013} whose codewords end with 01,
and any of the $64$ 6-bit words can be used in 
positions 1--6. The entry  for
`$\ell=7$, end = $\ew$, $\err=\chsub2$'
corresponds to 2-substitution error-detection which is
equivalent to 1-substitution error-correction. Here the
Hamming code of length 7 with $16$ codewords has a 
maximum number of codewords for this  length.
Similarly, the entry for `$\ell=7$,  $\err=\chid2$'
corresponds to 2-synchronization error-detection which is
equivalent to 1-synchronization error-correction. 
Here the Levenshtein code \cite{Levenshtein:66:en}  
of length 8 has 30 codewords.
We recall that a maximal code is
not necessarily maximum, that is, having the largest
possible number of codewords, for given \err and $\ell$. 
It seems maximum codes are rare, but there are
many random maximal ones having lower rates. 
The $\chdel1$-detecting code of~\cite{PAF:2013} 
has higher rate than all the random ones generated here.
\par
For the case of block solid codes (last column of the table), we note that the function \pickfrom\, in the algorithm \nextword\, has to be modified as the randomly chosen word $w$ should be in \ofw.
%
%
\section{Conclusions}\label{sec:last}
\pnsi
We have presented a unified method for generating
error control codes, for any rational combination of errors. The method cannot of course
replace innovative code design, but should be 
helpful in computing various examples of codes. The implementation \texttt{codes.py}
is available to anyone for download and use \cite{Fado}.
In the implementation for generating codes, we allow one to specify that generated words only come from a certain desirable subset $M$ of $\al^\ell$, which  is represented by a deterministic trellis. This requires changing the function \pickfrom\, in \nextword\, so that it chooses randomly words from $M$.
There are a few directions for 
future research. One  is to work on the efficiency of the implementations, possibly allowing parallel processing, so as to allow generation of block codes having longer block length.
Another direction is to somehow
find a way to specify that the set of generated codewords  
is a `systematic' code so as to allow efficient 
encoding of information. 
A third direction is to do
a systematic study on how one can map a stochastic
channel $\stoch$, like the binary symmetric channel or
one with memory, to a channel \err (representing a
combinatorial channel), so as the available algorithms
on $\err$ have a useful meaning on $\stoch$
as well.

\bibliographystyle{plain}
\bibliography{refs}

\end{document}